\newcommand{\rev}[1]{#1}
\newtheorem{theorem}{Theorem}[section]
\newtheorem{proposition}[theorem]{Proposition}
\newtheorem{corollary}[theorem]{Corollary}
\newtheorem{definition}{Definition}
\newtheorem{statement}{Statement}
\newtheorem{remark}{Remark}
\newcommand{\ub}{\overline{u}}
\newcommand{\round}[1]{\operatorname{\rho}\left({#1}\right)}
\newcommand{\integer}[1]{\operatorname{int}\left({#1}\right)}
\newcommand{\fractional}[1]{\operatorname{frac}\left({#1}\right)}
\newcommand{\sign}[1]{\operatorname{sign}\left({#1}\right)}
\begin{document}
\title{Switched control for quantized feedback systems:
invariance and limit cycle analysis}

\author{Alessandro~Vittorio~Papadopoulos,~\IEEEmembership{Member,~IEEE,}
        Federico~Terraneo,\\
        Alberto~Leva,~\IEEEmembership{Member,~IEEE,}
        Maria~Prandini,~\IEEEmembership{Senior Member,~IEEE}\thanks{A.V. Papadopoulos is with M{\"a}lardalen University, V{\"a}ster{\aa}s, Sweden, (e-mail: alessandro.papadopoulos@mdh.se), and F. Terraneo, A. Leva, and M. Prandini are with the Dipartimento di Elettronica, Informazione e Bioingegneria, Politecnico di Milano, Milano, 20133, Italy, e-mail: \{federico.terraneo, alberto.leva, maria.prandini\}@polimi.it).}\thanks{This work was done when the first author was a post-doctoral researcher at Politecnico di Milano, and is supported by the European Commission under the project UnCoVerCPS with grant number 643921. }
}

\maketitle

\begin{abstract}
We study feedback control for discrete-time linear time-invariant systems in the presence of quantization both in the control action and in the measurement of the controlled variable.
While in some application the quantization effects can be neglected, when high-precision control is needed, \rev{they have to be} explicitly accounted for in control design. In this paper we propose a switched control solution for minimizing the effect of quantization of both the control and controlled variables in the case of a simple integrator with unitary delay, a model that is quite common in the computing systems domain\rev{, for example in thread scheduling, clock synchronization, and resource allocation}.
We show that the switched solution outperforms the one without switching, designed by neglecting quantization, and analyze necessary and sufficient conditions for the controlled system to exhibit periodic solutions in the presence of an additive constant disturbance affecting the control input.
Simulation results provide evidence of the effectiveness of the approach.
 \end{abstract}

\begin{IEEEkeywords}
quantized feedback control, switched control, practical stability, computing system design,  limit cycle.
\end{IEEEkeywords}

\IEEEpeerreviewmaketitle

\section{Introduction}
\label{sec:IntroAndMotivation}

This paper deals with quantized feedback control for discrete-time linear time-invariant control systems. In particular, we consider the effect of quantization of both the measurements and the control actions.

In general, any digital implementation of a control system entails input and output quantization.  This is typically the case when the output measurements  used for feedback and the control actions applied to the controlled process are transmitted via a digital communication channel, \rev{\cite{Mitter2004,Baillieul2002}.}
Depending on the specific application, quantization effects can become relevant and significantly affect the control system performance. While in some applications the quantization effects can be neglected, when high-precision control is needed, quantization has to be explicitly accounted for in control design.

\rev{Given a system that is stabilized by a standard linear time-invariant feedback controller when there is no quantization, the problem addressed herein is to find a switched controller that steers the system towards the smallest possible invariant set that includes the origin when its control input and output are quantized. We focus, in particular, on a discrete time linear system described by an integrator with a one time-unit delay. The system is affected by an additive constant bias on the control input, and both control input and controlled output measurements are quantized via a rounding operator.}

\rev{Despite its simplicity, this system structure appears in several problems pertaining to the domain of computing systems. For example it represents the dynamics from reservation to cumulative CPU time in task scheduling, a typical source of disturbance being the latency of the preemption interrupt~\cite{bib:LevaMaggio-2010a,papadopoulos2015rtsj}. It models the disturbance to error dynamics in clock synchronization for wireless sensor networks, where the most relevant source of disturbance is given by temperature variations in the oscillator crystals~\cite{leva2015tcst}. It plays a role in server systems~\cite{kihl+web07}, queuing systems~\cite{KjaerKihlRobertsson_CDC2007}, and so forth, as can be observed from the variety of problems mentioned in~\cite{bib:DiaoEtAl-2005a,bib:HellersteinEtAl-2004a,papadopoulos2015mcmds,astrom2010feedback}. Quantizers are present in virtually the totality of these applications, and dealing with their effect is important when high-performance is required.
In fact, several of the problems just listed require zero error in the presence of constant inputs, hence the relevance of quantization becomes apparent. Constant (or practically constant) are for example thermal disturbances experienced by wireless nodes in a climatized environment. In such an application context, temperature variations are very small and slow, because they are smoothed by the environment thermal dynamics and counteracted by temperature control, and abrupt variations may occur but only sporadically, for example when turning the air conditioners on once per day or week. }

\rev{The considered linear system is stabilizable, and in the absence of quantization, one can introduce a standard proportional-integral (PI) controller to compensate for a constant load disturbance and bring the state trajectories to the zero equilibrium. The presence of input and output quantizers degrades the
PI controller performance, introducing oscillations in the quantized output with an excursion that is equal to twice the quantizer resolution. Such oscillations may be not admissible when dealing with high precision computing systems. Our goal in this paper is to design a better performing controller, while maintaining a PI-like structure in order to ease implementation and tuning. Invariant set and reachability analysis are the methods adopted to assess the properties of the designed control scheme.  }

More precisely, we propose a switched variant of the PI controller to address quantization and minimize its effect on the feedback control system performance. We then show that when the disturbance is constant,  the switched control solution presents an invariant set for the quantized control input and output variables such that the quantized output is either zero or has a unitary amplitude \rev{(corresponding to the least significant bit, hence to the minimum representable quantity)}. A numerical reachability analysis study shows that, if the PI controller is suitably tuned, this invariant set is a global attractor. Necessary and sufficient conditions for the existence of a periodic solution in the (unquantized) control input and output variables are given as well.

\rev{Many papers in the literature address control of quantized linear systems. 
Most of them focus on stabilization at the  zero equilibrium in absence of disturbances.
Contributions can be classified based on the characteristics of the adopted quantizer. If the quantizer has a finite resolution, like in our paper where uniform quantization is adopted, then, \cite{delchamps1990tac} shows that classical stability cannot be achieved and introduces the \emph{practical stability}
notion for quantized systems. More specifically, \cite{delchamps1990tac} proves that,
given an unstable discrete time system that is stabilizable, if the state measurements are quantized, then, there is no control strategy that makes all trajectories of the quantized state-feedback system asymptotically converge to zero, and only convergence to an invariant set around zero can be obtained.
Classical results on asymptotic stability of the origin are recovered in \cite{BL2000, KD2008} by changing the resolution of the quantizer depending on the state behavior, and hence making the resolution higher and higher while approaching the origin. This approach has been extended to input to state and $l_2$ stabilization in presence of a disturbance input in \cite{LN2007} and \cite{KD2008}, respectively.  
When a logarithmic quantizer with (countably) infinite  quantization levels is adopted,
the resolution of the quantizer is infinite close to the origin, and global asymptotic stability
can be achieved, \cite{Elia2001,Coutinho2010}. However, when finite-level logarithmic quantizers
are used, practical stability results can only be proven. Analysis of practical stability and constructive results on how to design finite-level logarithmic state quantizers guaranteeing practical stability are given in, e.g., \cite{Elia2001,Maestrelli2015}.  \\
It is worth noticing that most papers in the literature consider quantization of either the control input (see, e.g., \cite{PLPB02,Fu2005,Tian2008}) or the controlled output (see, e.g., \cite{chou1996jsme,delchamps1990tac,BL2000,KD2008,feng1997tac,raisch1995hs,CO2008,ishido2010acc,ishido2011scl}), whereas only a few address the set-up considered in this paper, where both control input and controlled output are quantized. This is the case in \cite{Cepeda2008,Picasso2007,Coutinho2010}.
Whereas logarithmic quantizers with infinite quantization levels are considered in  \cite{Coutinho2010}, in \cite{Cepeda2008}, input and output quantizers are assumed to have a finite number of quantization levels. Practical stabilization of a double integrator system is studied in \cite{Cepeda2008}, showing how the parameters defining the quantizers should be set for the practical stability result to hold. Extension to higher order integrator models is outlined as well, focusing however on stabilization without disturbances acting on the system.
The work closest to the present paper is \cite{Picasso2007}, where \emph{pre-defined} finite resolution quantizers on both input and output are given and a feedback controller is designed to achieve some control goal. More precisely, in \cite{Picasso2007}, practical stabilization of unstable discrete time linear systems is addressed, and a quantized static state-feedback controller is designed that brings the state of the system to some invariant set around the origin in a finite number of steps. Our approach differs from \cite{Picasso2007} in that we address disturbance compensation, and we introduce a switched output-feedback controller to make the state of the controlled system reach an invariant set around the origin. Disturbance compensation and dynamic state/output-feedback control are not addressed in \cite{Picasso2007} and related work. In turn, while the methodology in \cite{Picasso2007} is of general applicability, our design is tailored to a simple system model and not easily extendable to different higher dimensional models. }

The rest of the paper is organized as follows. Section~\ref{sec:FLOPSYNC} first describes the control scheme without switching, and highlights how quantization deteriorates the performance of the control system. The switched solution that allows for minimizing the effect of quantization is then presented in the same section. Section~\ref{sec:QcontrolInvariance} provides necessary and sufficient conditions for entering the invariant set. A numerical reachability analysis study is performed in Section~\ref{sec:Reachability} for identifying the controller parameter tuning that makes such an invariant set a global attractor. Section~\ref{sec:limitCycles} gives necessary and sufficient conditions for the existence of periodic solutions. Finally, Section~\ref{sec:Experiments} provides evidence of the effectiveness of the approach via a simulation study, while Section~\ref{sec:Conclusions} concludes the paper.

 \section{Basic control scheme and its switched variant}
\label{sec:FLOPSYNC}

\subsection{Notation}

We now introduce some notation that will be used in the paper developments.
\begin{definition}[Sign function]
The sign function of a real number $z$ is defined as:
\begin{align*}
\sign{z} :=
\begin{cases}
1, & z > 0\\
0, & z = 0\\
-1, & z < 0
\end{cases}
\end{align*}
\end{definition}

\begin{definition}[Integer part of a number]
The integer part of a real number $z$ is defined as:
\begin{align*}
\integer{z} :=
\begin{cases}
\lfloor z \rfloor, & z\geq 0\\
\lceil z \rceil, & z < 0
\end{cases}
\end{align*}
where $\lfloor z \rfloor$ is the largest signed integer smaller than or equal to $z$ and  
$\lceil z \rceil$ is the smaller  signed  integer larger than or equal to $z$.            
\end{definition}  

\begin{definition}[Fractional part of a number]
The fractional part of a real number $z$ is defined as:
\begin{align*}
\fractional{z}: = z - \integer{z}
\end{align*}
\end{definition}

A quantizer maps a real-valued function into a piecewise constant function taking values in a discrete set, and here it is defined as the rounding operator.

\begin{definition}[Rounding operator]
Given a real number $z$, its rounding $\rho: \mathbb{R} \to \mathbb{Z}$ is defined as:
\begin{align*}
\round{z} :=
\begin{cases}
\sign{z} \cdot \vert \integer{z} \vert, & 0 \leq |\fractional{z}| < \frac{1}{2}\\
\sign{z} \cdot \left( \vert \integer{z} \vert + 1\right), & \frac{1}{2} \leq |\fractional{z}| < 1
\end{cases}
\end{align*}
\end{definition}

\begin{definition}[Rounding error]
Given a real number $z$, its rounding error is:
\begin{align*}
\Delta_z := z - \round{z}.
\end{align*}
\end{definition}
Notice that according to the provided definitions, the rounding error of a real number $z$ is always bounded as $|\Delta_z| \leq \frac{1}{2}$.

Finally, note that given two real numbers $a\in \mathbb{R}$, and $b \in \mathbb{R}$, we have that $\round{\round{a}+b} = \round{a} + \round{b}$.

\subsection{The basic scheme}

We consider a system with control input $u$ and output $e$, which is governed by the following equation
\begin{equation}
 e(k+1) = e(k)+\round{u(k)}+d(k),
 \label{eq:errorDynamics}
\end{equation}
where $d$ is some additive \rev{constant yet unknown} disturbance on the quantized control action $\round{u}$.

The output $e$ represents some error signal and should be driven to zero by compensating the disturbance $d$ through the control input $u$. To this purpose, quantized measurements of $e$ are available for feedback.
Due to the quantization of both $u$ and $e$, the disturbance might not be exactly compensated and the goal is to design an output feedback compensator so that $e$ is kept below the minimum resolution as defined by the quantizer ($\round{e}=0$).

The transfer function between the \emph{residual disturbance} $\round{u}+d$ and the controlled variable $e$ is given by
\begin{equation}
 P(z) = \frac{1}{z-1},
 \label{eqn:Background-edyn-u-TF}
\end{equation}
which is a discrete time integrator with a one time unit delay.

Suppose that disturbance $d$ is constant, and neglect the quantization for the time being.
Then, a discrete-time Proportional Integral (PI) controller described via the transfer function:
\begin{equation}
 R(z)=\frac{1-\alpha z}{z-1},
  \label{eqn:Background-R}
\end{equation}
would suffice to drive $e$ to zero with a rate of convergence that can be set via the parameter $\alpha$.
Indeed, if we neglect the quantizers, the effect of the disturbance $d$ on the output $e$ can be described via the (closed-loop) transfer function
\begin{equation*}
F(z)= \frac{P(z)}{1-R(z)P(z)} = \frac{z-1}{z(z+\alpha-2)},
\end{equation*}
which corresponds to an asymptotically stable linear system if  $1 < \alpha < 3$.
Hence, in the absence of quantization effects, the PI controller guarantees that the error converges to zero in the presence of a constant disturbance, with a rate of convergence that depends on the parameter $\alpha$. If $\alpha=2$, output $e$ would be brought to zero in  two time units.

Figure~\ref{fig:flopsyncControlScheme} shows the resulting control scheme, including the quantizers.
\begin{figure}[t]
\centering
\includegraphics[scale=1]{./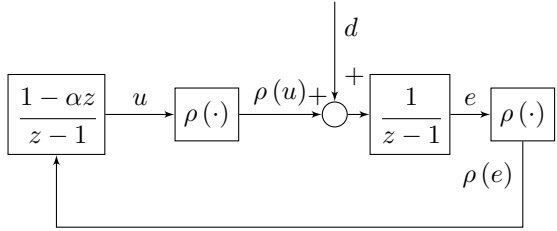}
\caption{Basic control scheme with quantizers.}
\label{fig:flopsyncControlScheme}
\end{figure}

\subsection{The effect of quantization}
As anticipated in the introduction, whenever high-precision control is needed, quantization can significantly deteriorate the performance of the control system. Indeed, quantization effects are not negligible in almost all the applications where a digital implementation is in place.

\rev{In particular, in the case of the scheme in Figure~\ref{fig:flopsyncControlScheme}, a constant disturbance may cause the system to end up in a limit cycle where the excursion in amplitude of the quantized error is $2$. An example is shown in Figure~\ref{fig:FLOPSYNC-res}, with $\alpha = 1.4$, $d(k) = \overline{d} = 1.2$, and the control system initialized as $e(0) = 2$, $u(0) = 0$.
This figure, and, more precisely, the behavior of the error signal $e$, shows that the system with transfer function $P(z)$ integrates over time the residual between the disturbance $\overline d$ and the quantized control input $\round{u}$. Due to the quantization on the system output $e$, the PI controller keeps its control action constant as long as $\round{e}$ is zero. It then reacts when the integrated residual disturbance exceeds the threshold $1/2$ in amplitude and makes the quantized output $\round{e}$ change value from 0 to either $1$ or $-1$, depending on its sign. The control signal reverses the sign of the residual disturbance, thus causing the quantized output $\round{e}$ too to change sign. As a result, $\round{e}$ is brought to a limit cycle where it keeps commuting between $-1$ and $1$, with an excursion in amplitude that is equal to $2$.}

\begin{figure}[tb]
\centering
\includegraphics[scale=1]{./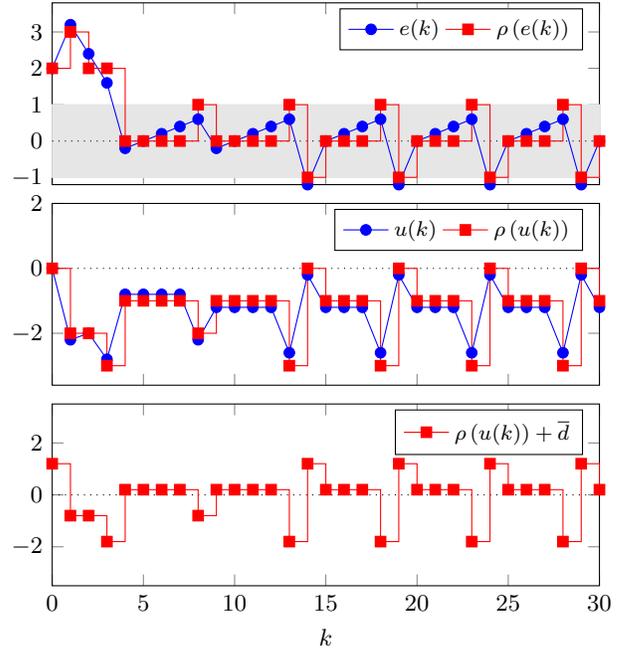}
\caption{\rev{The effect of quantization in the control scheme in Figure~\ref{fig:flopsyncControlScheme}: error signal $e$ and its quantized version (top plot), control input $u$ and its quantized version (middle plot), and residual disturbance $\round{u}+\overline{d}$ (bottom plot).}}
\label{fig:FLOPSYNC-res}
\end{figure}

\subsection{The proposed switched control scheme}
\label{sec:ProblemStatement}
In this section, we propose a switched control scheme that reduces the effect of quantization, steering the system to a limit cycle of an amplitude that is half of the one obtained with the control scheme in~Figure~\ref{fig:flopsyncControlScheme}. The proposed solution has the advantage of still adopting simple controllers, which leads to a system easily implementable in an embedded device, with very low overhead.

The controller is composed of a linear part with transfer function
\begin{align*}
\tilde{R}(z) = \dfrac{\alpha z - 1}{z}.
\end{align*}
and a switched part where the control action $\tilde{u}$ computed by $\tilde{R}(z)$ is set as the input to the following modified integrator:
\begin{align*}
\begin{cases}
u(k+1) = u(k) + \tilde{u}(k+1), & \text{if } \round{e(k+1)} \neq 0\\
u(k+1) = \round{u(k)} + \tilde{u}(k+1), & \text{if } \round{e(k+1)} = 0
\end{cases}
\end{align*}
that finally computes the actual control input $u$, based on the quantized  error measurements $\round{e}$.
Figure~\ref{fig:switchedControlScheme} shows the resulting switched control scheme.
\begin{figure}[t]
\centering
\includegraphics[scale=1]{./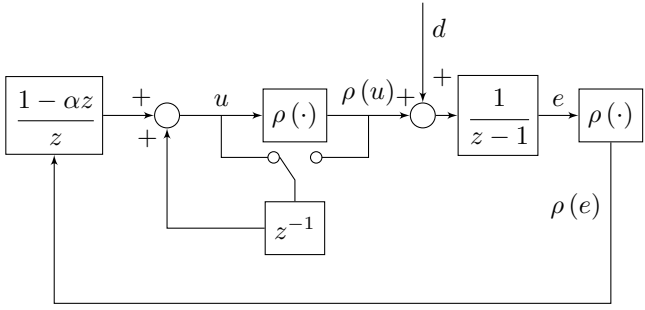}
\caption{Proposed switched control scheme.}
\label{fig:switchedControlScheme}
\end{figure}

Note that if $\round{e(k+1)} \neq 0$, then, the effect of  $\round{e}$ on $u$ is describe by the transfer function $R(z)$ of the PI controller previously presented. Furthermore, in the absence of quantization, the two schemes in Figures~\ref{fig:flopsyncControlScheme} and~\ref{fig:switchedControlScheme} coincide.

The switched control system dynamics is characterized by the state variables $u$ and $e$, and can be expressed as follows:
\begin{itemize}
\item if $\round{e(k+1)} = \round{e(k) + \round{u(k)} + d(k)} = 0$, then:
\begin{align}
&\begin{cases}
e(k+1) = e(k) + \round{u(k)} + d(k)\\
u(k+1) = \round{u(k)} + \round{e(k)}
\end{cases}
\label{eq:modeUguale0}
\end{align}
\item if $\round{e(k+1)} = \round{e(k) + \round{u(k)} + d(k)} \neq 0$, then:
\begin{align}
&\begin{cases}
\begin{aligned}
e(k+1) &= e(k) + \round{u(k)} + d(k)\\
u(k+1) &= u(k) + \round{e(k)} \\
       &~ - \alpha \round{e(k) + \round{u(k)} + d(k)}
\end{aligned}
\end{cases}
\label{eq:modeDiverso0}
\end{align}
\end{itemize}

 \section{Invariant set analysis}
\label{sec:QcontrolInvariance}
In this section we prove that, for a constant disturbance $d(k)=\overline{d}$, the proposed control scheme admits an invariant set in the quantized state variables $\round{e}$ and $\round{u}$, and within that set the amplitude of the quantized  error oscillations is $1$.

We characterize the conditions under which the control system enters this invariant set. To this purpose it is convenient to express the control input as the quantized disturbance compensation term $-\round{\overline{d}}$ plus the residual:
\begin{align}
&u(k) = -\round{\overline{d}} + \ub(k),
\label{eq:utilde}
\end{align}
and let
\begin{align}
&\Delta_d = \overline{d} - \round{\overline{d}},
\label{eq:deltad}
\end{align}
be the rounding error of the disturbance. We can then rewrite the control system dynamics in the state variables $e$ and $\ub$ as:
\begin{itemize}
\item if $\round{e(k+1)} = \round{e(k) + \round{\ub(k)} + \Delta_d} = 0$, then:
\begin{align}
&\begin{cases}
e(k+1) = e(k) + \round{\ub(k)} + \Delta_d\\
\ub(k+1) = \round{\ub(k)} + \round{e(k)}
\end{cases}
\label{eq:modeUguale0_delta}
\end{align}
\item if $\round{e(k+1)} = \round{e(k) + \round{\ub(k)} + \Delta_{d}} \neq 0$, then:
\begin{align}
&\begin{cases}
\begin{aligned}
e(k+1) &= e(k) + \round{\ub(k)} + \Delta_{d}\\
\ub(k+1) &= \ub(k) + \round{e(k)}\\
               &~ - \alpha \round{e(k) + \round{\ub(k)} + \Delta_d}
\end{aligned}
\end{cases}
\label{eq:modeDiverso0_delta}
\end{align}
which better shows that the rounding error of the disturbance is integrated by the process dynamics.
\end{itemize}

\begin{theorem} \label{th:qinv}
Let $1 < \alpha < \frac{3}{2}$, and consider the system described by~\eqref{eq:modeUguale0_delta} and~\eqref{eq:modeDiverso0_delta}. If, at some time $k$
\begin{subequations}
\label{eq:hypotheses}
\begin{align}
&-\frac{1}{2} < e(k) < \frac{1}{2} \label{eq:hypotheses_a}\\
&1 \leq \alpha -\ub(k)\sign{\Delta_d} < \frac{3}{2} \label{eq:hypotheses_b}\\
&-\frac{1}{2} < \ub(k) < \frac{1}{2} \label{eq:hypotheses_c}
\end{align}
\end{subequations}
then, for all the subsequent time steps $k+h$, $h>0$:
\begin{align}
(\round{e(k+h)},&\round{\ub(k+h)}) \in \nonumber\\
&\left\lbrace (0,0), (\sign{\Delta_d},-\sign{\Delta_d}) \right\rbrace.  \label{eq:invariantSet}
\end{align}
Moreover, $\left\lbrace (0,0), (\sign{\Delta_d},-\sign{\Delta_d}) \right\rbrace$ is the smallest invariant set for $\round{e}$ and $\round{\ub}$, when the system evolves starting from~\eqref{eq:hypotheses}.
\end{theorem}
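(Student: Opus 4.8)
The plan is to exhibit an explicit forward-invariant region in the continuous state space $(e,\ub)$ whose image under the two quantizers is exactly the claimed set, and then to settle minimality by showing that both quantized configurations are actually attained. First I would reduce to the case $\Delta_d>0$, so that $\sign{\Delta_d}=1$: the case $\Delta_d<0$ follows from the symmetry $(e,\ub,\Delta_d)\mapsto(-e,-\ub,-\Delta_d)$ of the dynamics~\eqref{eq:modeUguale0_delta}--\eqref{eq:modeDiverso0_delta}, and $\Delta_d=0$ is immediate since then $e$ is frozen and the state never leaves $(0,0)$. Throughout I would use $|\Delta_d|\le\tfrac12$ and note that \eqref{eq:hypotheses_a} and \eqref{eq:hypotheses_c} give $\round{e(k)}=\round{\ub(k)}=0$.

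Next I would introduce the two continuous regions $R_0=\{(e,\ub): -\tfrac12<e<\tfrac12,\ \alpha-\tfrac32<\ub\le\alpha-1\}$ and $R_1=\{(e,\ub): \tfrac12\le e<1,\ -\tfrac32<\ub\le-1\}$. Using $1<\alpha<\tfrac32$ one checks that on $R_0$ we have $(\round{e},\round{\ub})=(0,0)$ and on $R_1$ we have $(\round{e},\round{\ub})=(1,-1)=(\sign{\Delta_d},-\sign{\Delta_d})$; moreover \eqref{eq:hypotheses_a}--\eqref{eq:hypotheses_c} say precisely that $(e(k),\ub(k))\in R_0$ (condition \eqref{eq:hypotheses_b} rewrites as $\ub(k)\in(\alpha-\tfrac32,\alpha-1]$). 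The heart of the argument is then to prove by induction on $h$ that $(e(k+h),\ub(k+h))\in R_0\cup R_1$, which yields \eqref{eq:invariantSet} at once.

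For the inductive step I would case-split on the active mode. From $R_0$ the update gives $e^+=e+\Delta_d$; if $e+\Delta_d<\tfrac12$ the system is in mode~\eqref{eq:modeUguale0_delta}, so $\ub^+=\round{\ub}+\round{e}=0\in(\alpha-\tfrac32,\alpha-1]$ and $e^+\in(-\tfrac12,\tfrac12)$, keeping the state in $R_0$; if $e+\Delta_d\ge\tfrac12$ then $e^+\in[\tfrac12,1)$ (using $e<\tfrac12$ and $\Delta_d\le\tfrac12$), the system is in mode~\eqref{eq:modeDiverso0_delta}, and $\ub^+=\ub-\alpha$. From $R_1$ one is always in mode~\eqref{eq:modeUguale0_delta}, since $e^+=e-1+\Delta_d\in(-\tfrac12,\tfrac12)$ (the strict lower bound uses $\Delta_d>0$), and then $\ub^+=\round{\ub}+\round{e}=-1+1=0$, returning the state to $R_0$. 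The step I expect to be delicate, and the one that really consumes the hypotheses, is verifying $\round{\ub-\alpha}=-1$ in the $R_0\to R_1$ transition: condition \eqref{eq:hypotheses_b} forces $\ub-\alpha\in(-\tfrac32,-1]$, which rounds to $-1$ \emph{exactly} because $\alpha<\tfrac32$; were $\alpha\ge\tfrac32$ or $\ub$ slightly smaller, $\ub-\alpha$ would round to $-2$ and invariance would break. This places $(e^+,\ub^+)\in R_1$ and closes the induction.

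Finally, for minimality I would argue that, when $\Delta_d\neq0$, the trajectory visits both members of the set. It realizes $(0,0)$ at $h=0$; and since inside $R_0$ with $\ub=0$ the error obeys $e^+=e+\Delta_d$ with $\Delta_d>0$, after finitely many steps $e$ exceeds $\tfrac12$, forcing the transition into $R_1$ and hence realizing $(\sign{\Delta_d},-\sign{\Delta_d})$. Consequently any invariant set containing the trajectory that starts in~\eqref{eq:hypotheses} must contain both points, so the displayed set is the smallest such set; when $\Delta_d=0$ it degenerates to $\{(0,0)\}$, which is trivially minimal.
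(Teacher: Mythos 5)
Your proof is correct and follows essentially the same route as the paper's: your $R_0$ is exactly the hypothesis set~\eqref{eq:hypotheses} (for $\Delta_d>0$), your $R_0\to R_1\to R_0$ case analysis---with the key verification that $\ub-\alpha\in\left(-\tfrac{3}{2},-1\right]$ rounds to $-1$, and that $e-1+\Delta_d\in\left(-\tfrac{1}{2},\tfrac{1}{2}\right)$ on the return step---mirrors the paper's Cases 1) and 2), and your minimality argument (the error keeps integrating $\Delta_d>0$ and must eventually reach $\tfrac{1}{2}$, forcing a visit to $(\sign{\Delta_d},-\sign{\Delta_d})$) is the paper's as well. The only organizational differences are that you make the $\Delta_d<0$ reduction explicit via the odd symmetry of the rounding operator, where the paper simply omits the analogous derivation, and that you package the step-by-step analysis as forward invariance of $R_0\cup R_1$ proved by induction.
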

\begin{proof}
Let us first consider the case where $\Delta_d = 0$. Given the error evolution in~\eqref{eq:modeUguale0_delta}-\eqref{eq:modeDiverso0_delta}, we get from~\eqref{eq:hypotheses} that:
\begin{align*}
& e(k+1) = e(k) + \round{\ub(k)} + \Delta_{d} = e(k).
\end{align*}
Then $\round{e(k+1)} = \round{e(k)} = 0$, and by~\eqref{eq:hypotheses_a} the system evolves according to~\eqref{eq:modeUguale0_delta}:
\begin{align}
&\begin{cases}
\begin{aligned}
e(k+1) &= e(k) + \round{\ub(k)} + \Delta_{d}\\
\ub(k+1) &= \round{\ub(k)} + \round{e(k)}
\end{aligned}
\end{cases} \quad \Rightarrow \nonumber\\
&\begin{cases}
e(k+1) = e(k)\\
\ub(k+1) = 0
\end{cases}
\label{eq:caseDelta0}
\end{align}
The first equation satisfies~\eqref{eq:hypotheses_a}, and the second equation satisfies both~\eqref{eq:hypotheses_b} and~\eqref{eq:hypotheses_c}, so that the corresponding system keeps evolving according to~\eqref{eq:caseDelta0}.
In addition, $\left(\round{e(k+1)},\round{\ub(k+1)}\right)$ is equal to $\left(0,0\right)$, and the system will keep staying in $(0,0)$ for all time $k + h$, with $h>0$. This concludes the proof for the case when $\Delta_d = 0$.

We now consider the case when $0 < \Delta_d \leq 1/2$. Derivations for the case $-1/2 \leq \Delta_d < 0$ are analogous, and hence omitted. Given~\eqref{eq:hypotheses_c}, we have:
\begin{align*}
e(k+1) = e(k) + \round{\ub(k)} + \Delta_{d} &= e(k) + \Delta_{d}.
\end{align*}
Since $-1/2 < e(k) < 1/2$ in~\eqref{eq:hypotheses_a}, and $0 < \Delta_d \leq 1/2$, then
\begin{align*}
-\frac{1}{2} < e(k) + \Delta_d < 1,
\end{align*}
and
\begin{align}
\round{e(k+1)} &= \round{e(k)+ \Delta_{d}} \nonumber \\
&=\begin{cases}
0, & \vert e(k) + \Delta_d \vert < \frac{1}{2}\\
1, & \frac{1}{2} \leq e(k) + \Delta_d < 1
\end{cases}
\label{eq:twoCases}
\end{align}
We can then distinguish the following two cases:
\begin{enumerate}
\item\label{it:case1} $\round{e(k+1)} = \round{e(k) + \round{\ub(k)} + \Delta_{d}} = 0$
\item\label{it:case2} $\round{e(k+1)} = \round{e(k) + \round{\ub(k)} + \Delta_{d}} = 1$
\end{enumerate}
\emph{Case~\ref{it:case1})}: The system evolves according to~\eqref{eq:modeUguale0_delta}:
\begin{align}
&\begin{cases}
\begin{aligned}
e(k+1) &= e(k) + \round{\ub(k)} + \Delta_{d}\\
\ub(k+1) &= \round{\ub(k)} + \round{e(k)}
\end{aligned}
\end{cases} \quad \Rightarrow \nonumber\\
&\begin{cases}
e(k+1) = e(k) + \Delta_d\\
\ub(k+1) = 0,
\end{cases} \label{eq:case1}
\end{align}
so that in one step the quantized state is brought to zero: $(\round{e(k+1)},\round{\ub(k+1)}) = (0,0)$. Since the first equation in~\eqref{eq:case1} satisfies~\eqref{eq:hypotheses_a}, and the second satisfies both~\eqref{eq:hypotheses_b} and~\eqref{eq:hypotheses_c}, we are back then to~\eqref{eq:twoCases}.

\emph{Case~\ref{it:case2})}: The system evolves according to~\eqref{eq:modeDiverso0_delta}:
\begin{align}
&\begin{cases}
\begin{aligned}
e(k+1) &= e(k) + \round{\ub(k)} + \Delta_{d}\\
\ub(k+1) &= \ub(k) + \round{e(k)}\\
                  &~ - \alpha \round{e(k) + \round{\ub(k)} + \Delta_d}
\end{aligned}
\end{cases} \quad \Rightarrow  \nonumber\\
&\begin{cases}
e(k+1) = e(k) + \Delta_d\\
\ub(k+1) = \ub(k) - \alpha
\end{cases} \label{eq:case2}
\end{align}

By~\eqref{eq:hypotheses_b}, we have:
\begin{align*}
&-\frac{3}{2} < \ub(k)-\alpha \leq -1,
\end{align*}
hence
\begin{align*}
\round{\ub(k+1)} = \round{\ub(k)-\alpha} = -1,
\end{align*}
so that $(\round{e(k+1)},\round{\ub(k+1)}) = (1,-1)$.

If we next compute:
\begin{align*}
e(k+2) &= e(k+1) + \round{\ub(k+1)} + \Delta_{d} \\
&= e(k+1) -1 + \Delta_{d},
\end{align*}
since $e(k+1) = e(k) + \Delta_d$, and in this case $1/2 \leq e(k) + \Delta_d < 1$:
\begin{align*}
&-\frac{1}{2} < e(k+1) - 1 + \Delta_d < \frac{1}{2},
\end{align*}
we then have
\begin{align*}
&\round{e(k+2)} = 0.
\end{align*}
The dynamics therefore evolves according to~\eqref{eq:modeUguale0_delta}, i.e.,
\begin{align}
&\begin{cases}
e(k+2) = e(k+1) + \round{\ub(k+1)} + \Delta_{d}\\
\ub(k+2) = \round{\ub(k+1)} + \round{e(k+1)}
\end{cases} \quad \Rightarrow \nonumber\\
&\begin{cases}
e(k+2) = e(k+1) - 1 + \Delta_d\\
\ub(k+2) = - 1 + 1 = 0
\end{cases} \label{eq:case2b}
\end{align}
so that $(\round{e(k+2)},\round{\ub(k+2)}) = (0,0)$. In $2$ steps the quantized state is brought to zero. The first equation in~\eqref{eq:case2b} satisfies hypothesis~\eqref{eq:hypotheses_a}, the second satisfies both~\eqref{eq:hypotheses_b} and~\eqref{eq:hypotheses_c}, hence we are back to~\eqref{eq:twoCases}.

All the above shows that starting from~\eqref{eq:hypotheses}, the system ends up evolving in the invariant set $\{(0,0),(1,-1)\}$ for $(\round{e},\round{\ub})$. Now we need to prove that this is the smallest invariant set.

Note that we have just shown that from~\eqref{eq:hypotheses} the system either enters the invariant set in $(0,0)$ or in $(1,-1)$, and in this latter case it evolves to $(0,0)$ in one time step. Also, in both cases the system is back to set~\eqref{eq:hypotheses}, with $\ub = 0$ (see equations~\eqref{eq:case1} and~\eqref{eq:case2b}). We then need to show that the quantized state cannot keep being in $(0,0)$ indefinitely, but it will eventually switch to $(1,-1)$.
This is indeed the case because according to equation~\eqref{eq:case1}, the system keeps being in~\eqref{eq:hypotheses} with $\ub=0$ and keeps integrating the rounding error until $e$ (necessarily) exceed $1/2$. Then, we are in case 2 since $\round{e} = 1$, and the quantized state switches to $(1,-1)$.
\end{proof}

\begin{proposition} \label{th:uAlgebraic}
Let $1 < \alpha < \frac{3}{2}$, and consider the switched control system described by~\eqref{eq:modeUguale0_delta} and~\eqref{eq:modeDiverso0_delta}. If, at some time $k$, the state satisfies~\eqref{eq:hypotheses}, then, for all the time steps $k+h$, $h>1$:
\begin{align}
&e(k+h) = e(k+h-1) + \round{\ub(k+h-1)} + \Delta_d \label{eq:eProposition}\\
&\ub(k+h) = -\alpha\round{e(k+h)}
\label{eq:uAlgebraic}
\end{align}
\end{proposition}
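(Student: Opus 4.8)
The plan is to treat the two asserted identities separately. Equation~\eqref{eq:eProposition} needs no work: the error update $e(k+1)=e(k)+\round{\ub(k)}+\Delta_d$ is literally the first line of \emph{both} mode~\eqref{eq:modeUguale0_delta} and mode~\eqref{eq:modeDiverso0_delta}, so it holds verbatim at every step and in particular for $h>1$. The content of the proposition is therefore the algebraic relation~\eqref{eq:uAlgebraic}, and I would derive it as a consequence of Theorem~\ref{th:qinv}: since the hypotheses~\eqref{eq:hypotheses} hold at time $k$, the quantized state $(\round{e},\round{\ub})$ is confined for all $h>0$ to $\{(0,0),(\sign{\Delta_d},-\sign{\Delta_d})\}$, and, as established in the proof of that theorem, the state $(\sign{\Delta_d},-\sign{\Delta_d})$ is always left within one step; hence only the transitions $(0,0)\to(0,0)$, $(0,0)\to(\sign{\Delta_d},-\sign{\Delta_d})$ and $(\sign{\Delta_d},-\sign{\Delta_d})\to(0,0)$ can occur.

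Next I would inspect the $\ub$-update in each of the two modes. In mode~\eqref{eq:modeUguale0_delta} the update $\ub(k+h)=\round{\ub(k+h-1)}+\round{e(k+h-1)}$ depends only on quantized quantities; evaluated on either invariant state it returns $0+0=0$ or $-\sign{\Delta_d}+\sign{\Delta_d}=0$. Since every step governed by this mode lands in $(0,0)$, we get $\ub(k+h)=0=-\alpha\round{e(k+h)}$, so~\eqref{eq:uAlgebraic} already holds for such steps at every $h\ge1$. In mode~\eqref{eq:modeDiverso0_delta} the update $\ub(k+h)=\ub(k+h-1)+\round{e(k+h-1)}-\alpha\round{e(k+h)}$ uses the \emph{raw} value $\ub(k+h-1)$; the key remark is that it collapses to exactly $-\alpha\round{e(k+h)}$ the moment one knows $\ub(k+h-1)=0$ and $\round{e(k+h-1)}=0$.

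The heart of the argument is thus to certify $\ub=0$ at the $(0,0)$-states from which a mode-\eqref{eq:modeDiverso0_delta} step departs. Here I would exploit the structural fact that mode~\eqref{eq:modeDiverso0_delta} is entered only out of $(0,0)$, yet it lands in $(\sign{\Delta_d},-\sign{\Delta_d})\neq(0,0)$. Consequently a $(0,0)$-state occurring at a time $k+h-1$ with $h-1\ge1$ cannot have been produced by mode~\eqref{eq:modeDiverso0_delta}; it must stem from mode~\eqref{eq:modeUguale0_delta}, which by the previous paragraph forces $\ub(k+h-1)=0$. Substituting into the mode-\eqref{eq:modeDiverso0_delta} update then gives $\ub(k+h)=-\alpha\round{e(k+h)}$, establishing~\eqref{eq:uAlgebraic} for all $h\ge2$.

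The main obstacle, and precisely the reason the statement excludes $h=1$, is the single initial step in which $\ub$ has not yet been ``cleaned''. Hypothesis~\eqref{eq:hypotheses_c} only guarantees $\ub(k)\in(-\frac{1}{2},\frac{1}{2})$, not $\ub(k)=0$; if the very first transition is a departure into $(\sign{\Delta_d},-\sign{\Delta_d})$, mode~\eqref{eq:modeDiverso0_delta} yields $\ub(k+1)=\ub(k)-\alpha\sign{\Delta_d}$, which equals $-\alpha\sign{\Delta_d}$ only when $\ub(k)=0$, so~\eqref{eq:uAlgebraic} can genuinely fail at $h=1$. I would conclude by observing that this contamination occurs at most once: the resulting state $(\sign{\Delta_d},-\sign{\Delta_d})$ is vacated within one step through mode~\eqref{eq:modeUguale0_delta}, which resets $\ub$ to $0$, so from $h=2$ onward every relevant departure point is clean and~\eqref{eq:uAlgebraic} holds, as claimed.
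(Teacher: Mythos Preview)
Your proof is correct and follows essentially the same approach as the paper's: both note that~\eqref{eq:eProposition} is immediate from the dynamics, then use the invariant-set structure from Theorem~\ref{th:qinv} to show that after at most one ``cleaning'' step, every visit to $(0,0)$ has $\ub=0$, so the subsequent mode-\eqref{eq:modeDiverso0_delta} update collapses to $-\alpha\round{e}$. The paper argues chronologically by cross-referencing equations~\eqref{eq:case1}, \eqref{eq:case2}, \eqref{eq:case2b} from the proof of Theorem~\ref{th:qinv}, whereas you organize the same content around the three admissible transitions; the logical substance is the same.
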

\begin{proof}
Equation~\eqref{eq:eProposition} follows immediately from the system dynamics in~\eqref{eq:modeUguale0_delta}-\eqref{eq:modeDiverso0_delta}. Based on the proof of Theorem~\ref{th:qinv},~\eqref{eq:uAlgebraic} is trivially satisfied when $\Delta_d = 0$ since in this case $\round{e(k)}=0$, and the system evolves according to~\eqref{eq:caseDelta0}. Let $\Delta_d \neq 0$. If $\round{e(k+1)} = 0$, then $\ub(k+1) = 0$ (see equation~\eqref{eq:case1}). If instead, $\round{e(k+1)} = \sign{\Delta_d}$, then $\ub(k+1) = \ub(k) - \alpha\sign{\Delta_d}$, and in one time step $\ub(k+2) = 0$ (see equations~\eqref{eq:case2} and~\eqref{eq:case2b}).

After time $k+2$, $\ub$ keeps its value to $0$, when $\round{e} = 0$. It become $-\alpha\sign{\Delta_d}$ as soon as $\round{e} = \sign{\Delta_d}$, and then gets back to $\ub = 0$ in one time step.
As a consequence, it is possible to express $\ub(k+h)$, with $h>1$, as:
\begin{align*}
\ub(k+h) = -\alpha \round{e(k+h)},
\end{align*}
thus concluding the proof.
\end{proof}

A possible evolution of the system is shown in Figure~\ref{fig:limitCycle_example}, for $\alpha = 1.1$, $\Delta_d=0.4$, when the switched control system~\eqref{eq:modeUguale0_delta} and~\eqref{eq:modeDiverso0_delta} is initialized at $e(0) = 0.2$, and $\ub(0) = 0.6$. The green square in the figure indicates the initial condition, while the red area indicates the region~\eqref{eq:hypotheses}. The top graph in Figure~\ref{fig:limitCycle_example} shows the phase plot of the system. After the state enters the red area, it ends up in the invariant set characterized in Theorem~\ref{th:qinv}. The central and bottom graphs represent the time evolution of the state variables $e$ and $\ub$ and of their quantized version.
\begin{figure}[tb]
\centering
\includegraphics[scale=1]{./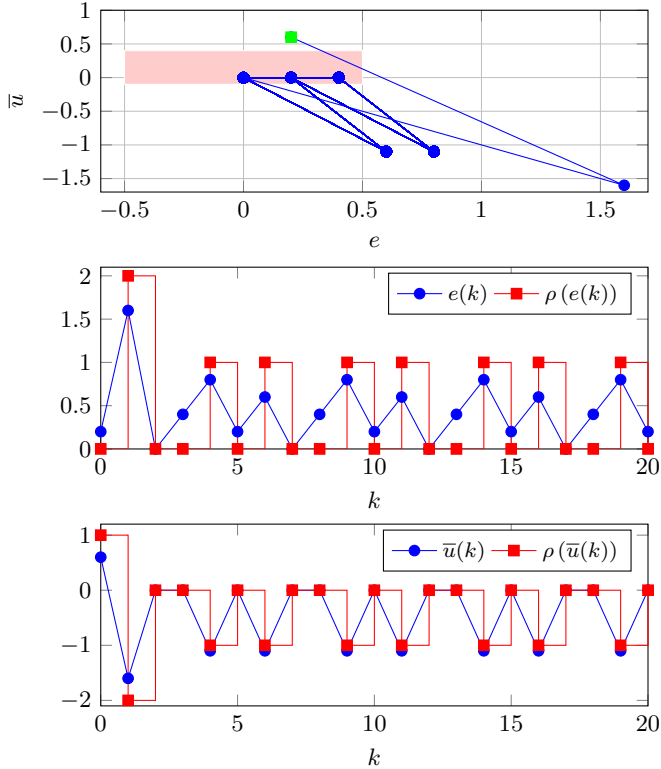}
\caption{Example of a trajectory entering the invariant set characterized in Theorem~\ref{th:qinv}. The top graph shows the phase plot in the state variables $e$ and $\ub$. The lower plots show the time evolution of the state variables and of their quantized versions.}
\label{fig:limitCycle_example}
\end{figure}

Theorem~\ref{th:qinv} provides conditions under which the system ends up in an invariant set where the quantized state variables $\round{e(k)}$ and $\round{\ub(k)}$ range between the values $0$ and $\sign{\Delta_d}$, and $0$ and $-\sign{\Delta_d}$, respectively, with an excursion of amplitude equal to $1$. However, depending on the value of $\alpha$ and of $\Delta_d$ the system may end up on a different invariant set.
This is studied in the following section.

\section{\rev{Numerical analysis of reachability and global attractiveness}}
\label{sec:Reachability}

\rev{The purpose of this section is to study the global attractiveness of the invariant set identified in Theorem~\ref{th:qinv}. To this end, we exploit the fact that once the system has entered the region~\eqref{eq:hypotheses}, in one step it ends up in the invariant set. 
Therefore, we only need to study the reachability of region~\eqref{eq:hypotheses}. 
Providing an analytical reachability analysis for the considered system is quite involved and far from being trivial, due to the quantization effect. In addition, most of the available tools for performing such an analysis (e.g., SpaceEx~\cite{Frehse2011}, Flow*~\cite{Chen2013}, KeYmaera~\cite{Platzer2008}, or Ariadne~\cite{Collins2012}) are meant for continuous time dynamical systems~\cite{Chen2015}.}

\rev{This analysis is parametric in the $(\alpha, \Delta_d)$ pair. 
To carry it out numerically, $\alpha$ and $\Delta_d$ were made variable in the sets $[1.001,1.499]$  and $[-0.5,0.5]$ taking $500$ and $1000$ equally spaced values, respectively.
For each considered pair $(\alpha, \Delta_d)$, system~\eqref{eq:modeUguale0_delta}-\eqref{eq:modeDiverso0_delta} was initialized with $(e(0),\ub(0)) \in [-10,10]^2$, taking $1000$ equally spaced values per coordinate. Note that $[-10,10]^2$ can be taken as representative of the whole state space because for larger values of $(e,\ub)$ the quantization errors become negligible. Outside that set one can therefore assume the system to behave linearly, causing any trajectory to end up in the set itself.}

\rev{The region delimited by the closed curve in Figure~\ref{fig:alphaDeltad} includes all pairs $(\alpha, \Delta_d)$ in the grid for which all the considered initial conditions cause the trajectory to end up in region~\eqref{eq:hypotheses}, and therefore in the invariant set identified in Theorem~\ref{th:qinv}. Note that the values $\Delta_d=\pm 0.5$ are not included in that region.}

\begin{figure}[tb]
\centering
\includegraphics[scale=1]{./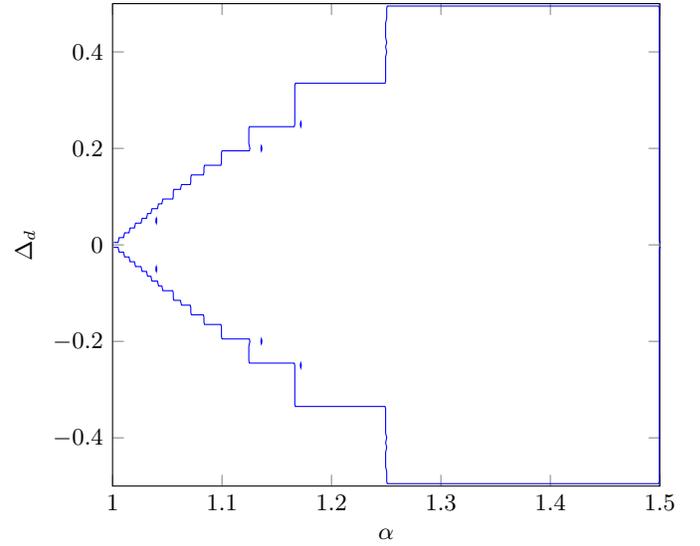}
\caption{The region delimited by the closed curve is the set of $(\alpha,\Delta_d)$ grid couples for which the invariant set of Theorem~\ref{th:qinv} is a global attractor.}
\label{fig:alphaDeltad}
\end{figure}

\rev{This leads to the following statement, which is not a theorem since it is based on a numerical analysis, not on a formal proof.}

\begin{statement}\label{th:globalAttractiveness}
If $5/4 < \alpha < 3/2$ and $|\Delta_d|<0.5$, the invariant set in Theorem~\ref{th:qinv} is globally attractive.
\end{statement}

\rev{In the case when $|\Delta_d|=0.5$, the numerical analysis revealed the existence of an invariant set where the excursion in amplitude of the quantized error is equal to $2$. 
In particular, for $\Delta_d=-0.5$ we get
\begin{align}
\left(\round{e},\round{\ub}\right) \in
\{(-1,2),(1,-1)\},
\label{eq:long_limitCycle1}
\end{align}
whereas for $\Delta_d = 0.5$
\begin{align}
\left(\round{e},\round{\ub}\right) \in
\{(-1,1),(1,-2)\}.
\label{eq:long_limitCycle2}
\end{align}
The invariant sets~\eqref{eq:long_limitCycle1} and~\eqref{eq:long_limitCycle2} can be reached only from a subset of initial conditions, since Theorem~\ref{th:qinv} holds for any $\Delta_d$.}

\rev{It is worth stressing that invariant sets with amplitude $2$ for the quantized error excursion only appeared when $|\Delta_d| = 0.5$. An example is shown in Figure~\ref{fig:LimitCycle_amplitude_2}.}

\rev{For $|\Delta_d| \neq 0.5$, if $\alpha<5/4$ our numerical study showed the existence of two invariant sets, both with unitary excursion amplitude, one of them  being that in Theorem~\ref{th:qinv}.}

\begin{figure}[t]
\centering
\includegraphics[scale=1]{./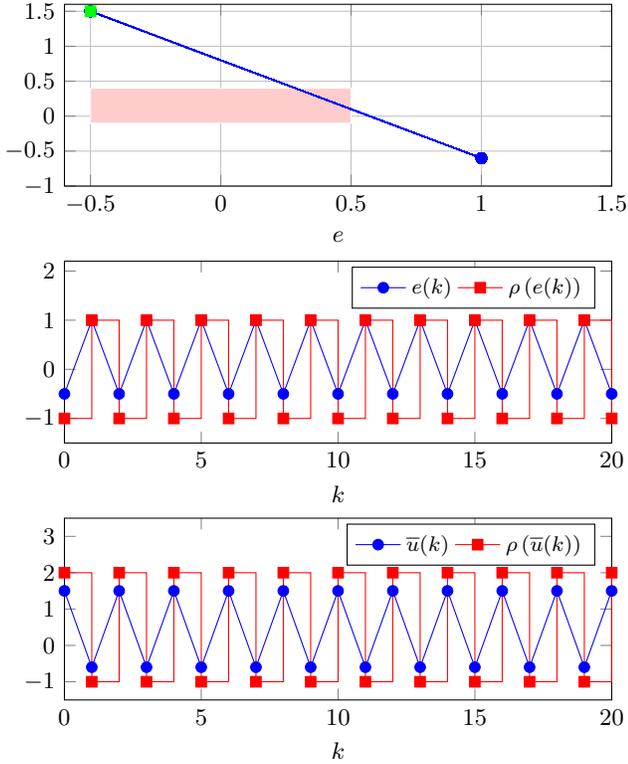}
\caption{Example of an invariant set that can be obtained with the proposed switched scheme when $|\Delta_d| = 0.5$. The state trajectory ends up in an invariant set with excursion of amplitude $2$ for the quantized state $e$. Both the phase plot (top graph) and the time evolution of the state variables $e$ and $\ub$ with their quantized versions (lower plots) are reported.  The red area indicated in the figure is the set~\eqref{eq:hypotheses}.}
\label{fig:LimitCycle_amplitude_2}
\end{figure}

\rev{Figure~\ref{fig:otherLimitCycle_example} shows an example of an invariant set that is different from the one in Theorem~\ref{th:qinv} (but still has a quantized error excursion of amplitude 1). Such an invariant set
\begin{align*}
\left(\round{e},\round{\ub}\right) \in \{(0,1),(1,0)\},
\end{align*}
is obtained for $\alpha = 1.1 (<5/4)$, $\Delta_d=-0.3$, when the system~\eqref{eq:modeUguale0_delta} and~\eqref{eq:modeDiverso0_delta} is initialized at $e(0) = -0.2$, and $\ub(0) = 0.6$. Note that the non-quantized control input behavior shown in Figure~\ref{fig:otherLimitCycle_example} is not easy to predict. On the contrary, the non-quantized control input behavior for the invariant set in Theorem~\ref{th:qinv} can be easily predicted based on $\alpha$ (see Proposition \ref{th:uAlgebraic}).}

Since $\alpha$ is a design parameter, we can choose it so as to enforce the presence only of the invariant set that is fully characterized in Theorem~\ref{th:qinv}, for all disturbances except for those with $|\Delta_d| = 0.5$.

\begin{figure}[t]
\centering
\includegraphics[scale=1]{./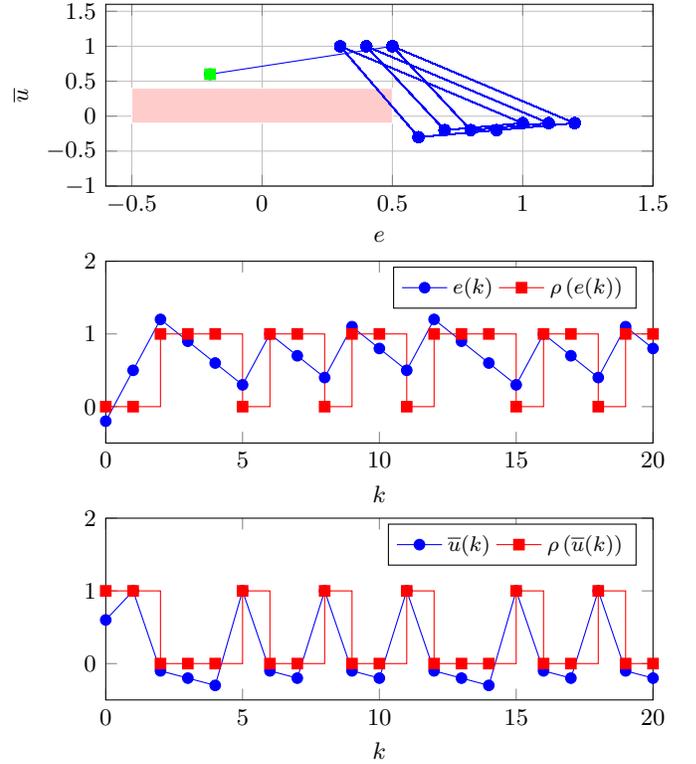}
\caption{Example of an invariant set that can be obtained with the proposed switched scheme for $\alpha <5/4$. The state trajectory ends up in an invariant set with excursion of amplitude $1$ for the quantized state $e$ and $\ub$. Both the phase plot (top graph) and the time evolution of the state variables $e$ and $\ub$ with their quantized versions (lower plots) are reported. The red area indicated in the figure is the set~\eqref{eq:hypotheses}. }
\label{fig:otherLimitCycle_example}
\end{figure}

 \section{Limit cycle analysis}
\label{sec:limitCycles}

\rev{In this section, we analyze the evolution of the switched control system within the invariant set in Theorem~\ref{th:qinv}, and determine possible periodic solutions for the error $e$ and the control input $\ub$, jointly with their period $p$. In particular, we show in Theorem~\ref{th:periodicity} that a necessary and sufficient condition for the presence of periodic solutions is that the disturbance rounding error, hence the disturbance, is a rational number. When dealing with applications in the computing systems domain, rational disturbances can  indeed occur due to the inherently discrete nature of the signals and processes involved. 
Note also that Theorem~\ref{th:periodicity} provides a  necessary and sufficient condition for the existence of a periodic solution so that we can state that for any irrational disturbance, no periodic solution exists, thus further characterizing the behavior of the switched control system.}

We can now start the analysis  by defining the notion of $n$-periodic limit cycle of period $p$.

\begin{definition}[$n$-periodic limit cycle of period $p$]
An \emph{$n$-periodic limit cycle of period $p$}, with $n,p \in \mathbb{N}$, is a solution of the switched control system~\eqref{eq:modeUguale0_delta}-\eqref{eq:modeDiverso0_delta} such that
\begin{align*}
\begin{cases}
e(k+p) = e(k)\\
\ub(k+p) = \ub(k)
\end{cases}, \quad \forall k \geq \overline{k}
\end{align*}
for some $\overline{k} \geq 0$, and the quantized state $(\round{e},\round{\ub})$ switches $n$ times per period.
\end{definition}

\begin{theorem}\label{th:periodicity}
A necessary and sufficient condition for the switched control system to evolve according to an $n$-periodic limit cycle of period $m$ within the invariant set in Theorem~\ref{th:qinv} is that the disturbance rounding error is rational and satisfies
\begin{align*}
|\Delta_d| =\frac{n}{m}, \quad \text{with } 1 \leq n < m, \text{ and } n, m \in \mathbb{N}.
\end{align*}
\end{theorem}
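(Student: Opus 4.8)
The plan is to collapse the two-dimensional switched dynamics inside the invariant set to a single scalar recursion for $e$, and then to exploit that the rounding operator only ever subtracts \emph{integers}, so that $e$ undergoes a rigid rotation modulo $1$. First I would invoke Proposition~\ref{th:uAlgebraic}: inside the invariant set $\ub(k+h)=-\alpha\round{e(k+h)}$, and since $1<\alpha<3/2$ forces $\round{-\alpha}=-1$ while $\round{e}\in\{0,\sign{\Delta_d}\}$, one gets $\round{\ub(k+h)}=-\round{e(k+h)}$. Substituting this into~\eqref{eq:eProposition} reduces the evolution to the scalar map
\[
e(k+h)=e(k+h-1)+\Delta_d-\round{e(k+h-1)}.
\]
Because $\round{\cdot}$ is integer valued, iterating yields $e(k+h)\equiv e(k)+h\,\Delta_d\pmod 1$: the fractional part of $e$ is advanced by a pure rotation of angle $\Delta_d$ on the circle $\mathbb{R}/\mathbb{Z}$, regardless of the switching logic. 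I would carry out the argument for $\Delta_d>0$ and invoke symmetry for $\Delta_d<0$, remarking that $\Delta_d=0$ gives the trivial fixed point $(0,0)$ with no switches (hence excluded by $n\ge 1$).

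For necessity, a limit cycle of period $m$ means $e(k+m)=e(k)$, and the rotation identity immediately forces $m\Delta_d\in\mathbb{Z}$; writing $m\Delta_d=n$ and using that a rounding error always satisfies $0<|\Delta_d|\le\tfrac12$ gives $1\le n$ and $n<m$, so $|\Delta_d|=n/m$ is rational (and irrational $\Delta_d$ can never close the orbit). Summing the scalar recursion over one period and cancelling $e(k+m)=e(k)$ yields $\sum_{i=k}^{k+m-1}\round{e(i)}=m\Delta_d=n$; since each summand is $0$ or $\sign{\Delta_d}$, the quantized state equals its nonzero value $(\sign{\Delta_d},-\sign{\Delta_d})$ at exactly $n$ instants per period, matching the switch count in the definition.

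For sufficiency the same accounting runs backwards, but here lies the one genuinely delicate point, which I expect to be the \textbf{main obstacle}: the rotation identity only delivers $e(k+m)\equiv e(k)\pmod 1$, not exact equality, so congruence must be upgraded to identity without case analysis on where $e$ sits in the invariant region. The clean device is to track $c_j:=e(j+m)-e(j)$. When $|\Delta_d|=n/m$, $m\Delta_d=n\in\mathbb{Z}$ makes $c_j\in\mathbb{Z}$ for every $j$, so the stated rounding property ($\round{\round a+b}=\round a+\round b$ with $a=c_j$ an integer) gives $\round{e(j)+c_j}=\round{e(j)}+c_j$, and the scalar map telescopes to
\[
c_{j+1}=c_j-\bigl(\round{e(j+m)}-\round{e(j)}\bigr)=c_j-c_j=0 .
\]
Hence $c_j$ collapses to $0$ after a single step, giving $e(j+m)=e(j)$ exactly, with $\ub$ periodic through $\ub=-\alpha\round{e}$; the period summation above then certifies exactly $n$ switches. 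It is this telescoping $c_{j+1}=0$, rather than the rotation observation itself, that makes the exact periodicity fall out cleanly.
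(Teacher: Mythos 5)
Your proof is correct, and it reaches the paper's conclusion by a genuinely different route. Both arguments start from Proposition~\ref{th:uAlgebraic}, but you collapse the dynamics inside the invariant set to the scalar map $e \mapsto e+\Delta_d-\round{e}$, i.e., a rigid rotation by $\Delta_d$ on $\mathbb{R}/\mathbb{Z}$, so that the theorem becomes the classical rational-versus-irrational rotation-number dichotomy. The paper instead works with inter-switch times: it introduces $\lambda(\Delta_d,x)=\left\lceil (0.5\sign{\Delta_d}-x)/\Delta_d \right\rceil$, tracks the post-switch values $x^{+(i)}$ of $e$, obtains rationality (necessity) by summing inter-switch times over $N$ switches, and proves sufficiency by localizing $x^{+(1)}$ in $[-0.5+\Delta_d,\,-0.5+2\Delta_d)$ and counting threshold crossings over a window of $m$ steps. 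Your necessity step (sum the recursion over one period; integrality of $\round{\cdot}$ forces $m\Delta_d=\pm n$, with $n$ the number of nonzero quantized instants, hence the switch count) has the same content with lighter bookkeeping, and your sufficiency device $c_j:=e(j+m)-e(j)\in\mathbb{Z}$ with $c_{j+1}=0$ avoids the paper's interval case analysis entirely and even shows that exact periodicity sets in one step after the scalar recursion becomes valid. What the paper's longer route buys is the explicit localization of the orbit (the same bounds $[-0.5+\Delta_d,\,0.5+\Delta_d)$ that reappear in Corollary~\ref{th:corollary}); your argument does not produce that localization, but the theorem does not require it.

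One caveat you should patch. The telescoping step invokes the paper's identity $\round{\round{a}+b}=\round{a}+\round{b}$, which, for the round-half-away-from-zero quantizer defined here, is actually false at half-integer points when the integer shift crosses zero: $\round{-1+0.5}=\round{-0.5}=-1$ while $-1+\round{0.5}=0$. The paper asserts the identity without qualification, so your reliance on it is understandable, but a rigorous version of your proof must rule out the bad case. This is easy: inside the invariant set $\round{e}\in\{0,\sign{\Delta_d}\}$, so for $\Delta_d>0$ one has $e\in(-0.5,1.5)$, whose only half-integer point is $0.5$; if $e(j)=0.5$, any nonzero integer $c_j$ would place $e(j+m)=0.5+c_j$ outside $(-0.5,1.5)$, a contradiction, so $c_j=0$ there and the integer-shift property is applied only where it is valid (the case $\Delta_d<0$ is symmetric). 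With that one-sentence repair, your argument is airtight.
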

\begin{proof}
Note that when the system is within the invariant set of Theorem~\ref{th:qinv}, the algebraic relation~\eqref{eq:uAlgebraic} holds. Therefore, we just need to show that the state variable $e$ evolves on the $n$-periodic limit cycle of period $m$.

We start by showing that a necessary condition for this to hold is that $|\Delta_d|$ is rational.

Suppose that at a certain time step $h$ the system is within the (minimal) invariant set of Theorem~\ref{th:qinv}. Assume also, without loss of generality, that $\left(\round{e(h)},\round{\ub(h)}\right) = (0,0)$. This entails that $|e(h)|<0.5$ and that the input $\round{u(h)}+ \overline{d}$ to the process is equal to $\Delta_d$ since $\round{u(h)}=-\round{\overline{d}}$ from equation~\eqref{eq:utilde}.
Indeed, the input to the process keeps constant and equal to $\Delta_d$ for $k$ time steps, until $|e(h+k)|$ exceeds or gets equal to $0.5$ if $\Delta_d>0$, $-0.5$ if $\Delta_d <0$. At time $h+k$, then,  $\round{e(h+k)}\neq 0$ and  the pair $(\round{e(h+k)},\round{\ub(h+k)})$ switches to $(\sign{\Delta_d},-\sign{\Delta_d})$ in the invariant set. The number of steps $k$ is given by the following formula
\begin{equation}
\label{eqn:lambda-equation}
k=\lambda(\Delta_d,x^{+(0)}):=\left\lceil \frac{0.5 \sign{\Delta_d}-x^{+(0)}}{\Delta_d} \right\rceil,
\end{equation}
where we set $e(h)=x^{+(0)}$.
Observe that $\lambda(\Delta_d,x^{+(0)})$ approaches infinity as $\Delta_d$ tends to zero, in accordance with Theorem~\ref{th:qinv} where the invariant set is composed only of the value 0 if $\Delta_d=0$.

The value $x^{+(1)}$ taken by $e(h+k+1)$ can be obtained as
\begin{align}
\label{eq:invariant-evolution}
x^{+(1)} = x^{+(0)} + \lambda(\Delta_d,x^{+(0)})\Delta_d + \Delta_d - \sign{\Delta_d},
\end{align}
since the process integrates an input that is constant and equal to $\Delta_d$ for $k=\lambda(\Delta_d,x^{+(0)})$ steps, and then receives as input $\round{u(h+k)}+\overline{d}=\round{\ub(h+k)}-\round{\overline{d}}+\overline{d}=-\sign{\Delta_d}+\Delta_d$ at time $h+k$.

If $x^{+(1)}$ is equal to $x^{+(0)}$, then the evolution of state $e$ of the system is periodic with period $\lambda(\Delta_d,x^{+(0)})+1$, and we have an $1$-periodic limit cycle of period $k+1$, because one single switch is needed within the invariant set to reset the state of the process to its original value, and this required $k+1$ steps.
If $x^{+(1)}\ne x^{+(0)}$, we can further iterate the same reasoning by considering $i>1$ switches within the invariant set and computing $x^{+(i)}$, $i>1$.
If there exists some integer $N>1$ such that $x^{+(N+h)}=x^{+(h)}$, for some $h\ge 0$, then, the state of the process evolves according to an $N$-periodic limit cycle.

More specifically, we need to compute
\begin{align*}
x^{+(N+h)} &= x^{+(h)} +\\
           &~+ \sum_{i=0}^{N-1}{\lambda(\Delta_d,x^{+(i+h)})\Delta_d} + N( \Delta_d - \sign{\Delta_d}),
\end{align*}
and set $x^{+(N+h)}=x^{+(h)}$, which reduces to solving
\begin{align*}
\left(\sum_{i=0}^{N-1}{\lambda(\Delta_d,x^{+(i+h)})} + N\right) |\Delta_d|=N.
\end{align*}
For this equation to admit a solution we must have
\[
|\Delta_d|=\frac{N}{L},
 \]
where we set $L= \left(\sum_{i=0}^{N-1}{\lambda(\Delta_d,x^{+(i+h)})} + N\right)$.
Note that since $L$ is an integer larger than $N$, for a periodic trajectory of the state process $e$ to exist, the absolute value of disturbance  quantization error $|\Delta_d|$ must be a rational number of the form $\frac{n}{m}$ with $n<m$. Irrational values for $|\Delta_d|$ are then incompatible with periodic solutions.

We now show that the condition $|\Delta_d| = \frac{n}{m}$ being a rational number is sufficient to have an $n$-periodic limit cycle of period $m$.

Observe that by definition of $\lambda $ as the minimum number of steps needed for $\round{e(h+k)}\neq 0$ starting from $e(h)=x^{+(0)}$, we have that
\begin{align*}
e(h+k)=&x^{+(0)} + \lambda(\Delta_d,x^{+(0)})\Delta_d \\
&\in
\begin{cases}
[0.5, 0.5 +\Delta_d) & \Delta_d>0\\
(-0.5+\Delta_d, -0.5] & \Delta_d<0
\end{cases}.
\end{align*}
This entails that
$x^{+(1)}$ in \eqref{eq:invariant-evolution} satisfies
\begin{align*}
x^{+(1)} \in
\begin{cases}
[-0.5+\Delta_d, -0.5 +2\Delta_d) & \Delta_d>0\\
(0.5+2\Delta_d, 0.5 +\Delta_d] & \Delta_d<0
\end{cases}
\end{align*}
irrespectively of $x^{+(0)}$. And this hold true for every $x^{+(i)}$ value of $e$ after $i$ switches within the invariant set, with $i\ge 1$.

Let $|\Delta_d|=\frac{n}{m}$, where $n$ and $m$ are coprime integers, $m > n \ge 1$,we next show that, after at least one switch has occurred within the invariant set, then, the switched control system starts evolving according to an $n$-periodic limit cycle of period $m$. We refer to the case when $\Delta_d>0$. The same reasoning applies to  $\Delta_d<0$.

If there were no further switches after time $h+k$ when $e(h+k)= x^{+(1)}$, then, $e(h+k+m)$ would take values in $[x^{+(1)},\  x^{+(1)}+m\Delta_d]=[x^{+(1)}, \ x^{+(1)}+n]$ since the system would integrate a constant input equal to  $\Delta_d$ for $m$ steps. However, as soon as $e$ becomes larger than or equal to the threshold $0.5$, then, its value is decreased by $1$, so that if there were exactly $n$ switches in the time frame $[h+k, \, h+k+m]$, then, $e(h+k+m)=x^{+(1)}=e(h+k)$ and a periodic solution would be in place.   Now, in order to show that there are exactly $n$ switches in the time frame $[h+k, \ h+k+m]$, one should simply check that $[x^{+(1)}, \ x^{+(1)}+n]$ contains $\{0.5+i, i=0,1, \dots, n-1\}$ and does not contain $0.5+n$.\\
Clearly, $0.5+i$ is contained in $[x^{+(1)},\ x^{+(1)}+n]$ for $i=0$ and $i=n-1$, since $x^{+(1)}>-0.5$.
Now we need to show that $x^{+(1)}+n<0.5+n$ to conclude that $[x^{+(1)}, \ x^{+(1)}+n]$ does not contain $0.5+n$. Indeed,  since $x^{+(1)}<-0.5 +2\Delta_d$, we have that $x^{+(1)}+n<n+2\Delta_d-0.5$, which entails $x^{+(1)}+n<n+0.5$ given that $\Delta_d \le 0.5$.

This concludes the proof.
\end{proof}

Figure~\ref{fig:aperiodic} plots the evolution of the state of the control system for $\Delta_d=\sqrt{2}/3$, $\alpha=1.1$, $e(0) = 0.2$, and $\ub(0) = 0.6$. Notice that since $\Delta_d$ is irrational, the obtained trajectory is not periodic.
\begin{figure}[t]
\centering
\includegraphics[scale=1]{./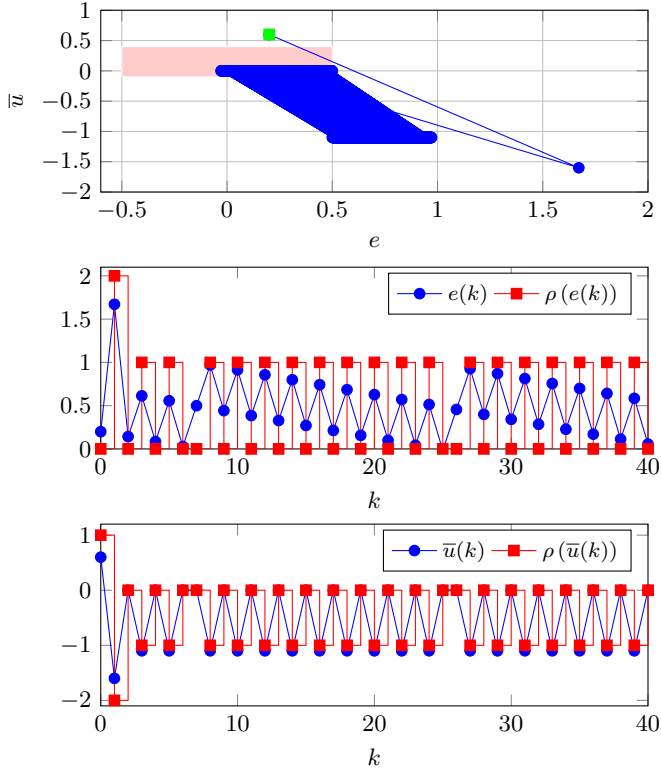}
\caption{Evolution of the switched control system when $\Delta_d = \sqrt{2}/3$.}
\label{fig:aperiodic}
\end{figure}

Figure~\ref{fig:limitcycle-1} shows an example of a $1$-periodic limit cycle of period $5$ obtained for $\Delta_d=0.2=\frac{1}{5}$, starting from the initial condition $e(0) = -0.4$, $\ub(0) = 0.2$.  Figure~\ref{fig:limitcycle-2} shows a $2$-periodic limit cycle of period $5$ for $\Delta_d=-0.4=\frac{2}{5}$ starting from the same initial condition $e(0) = -0.4$, $\ub(0) = 0.2$.

\begin{figure}[h]
\centering
\includegraphics[scale=1]{./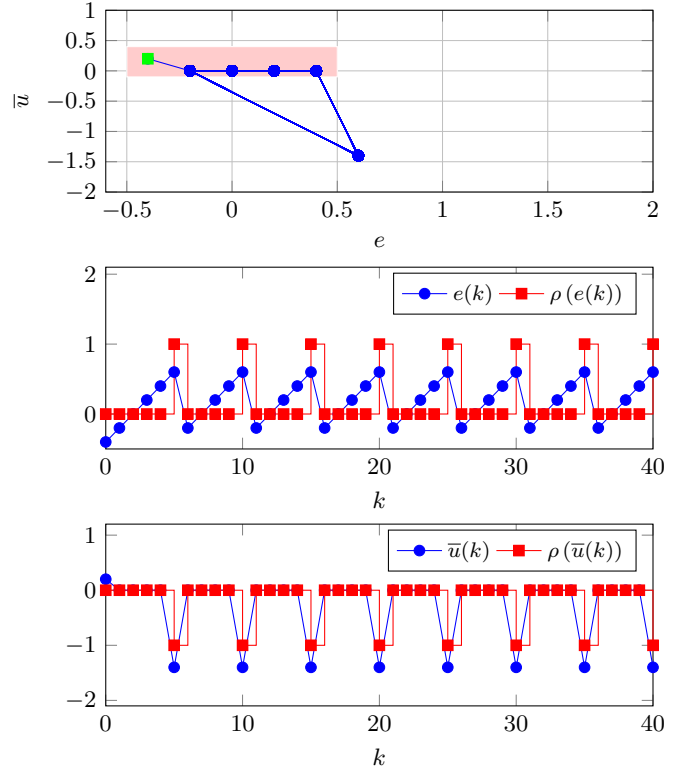}
\caption{Evolution of the switched control system when $\Delta_d = 1/5$.}
\label{fig:limitcycle-1}
\end{figure}

\begin{figure}[h]
\centering
\includegraphics[scale=1]{./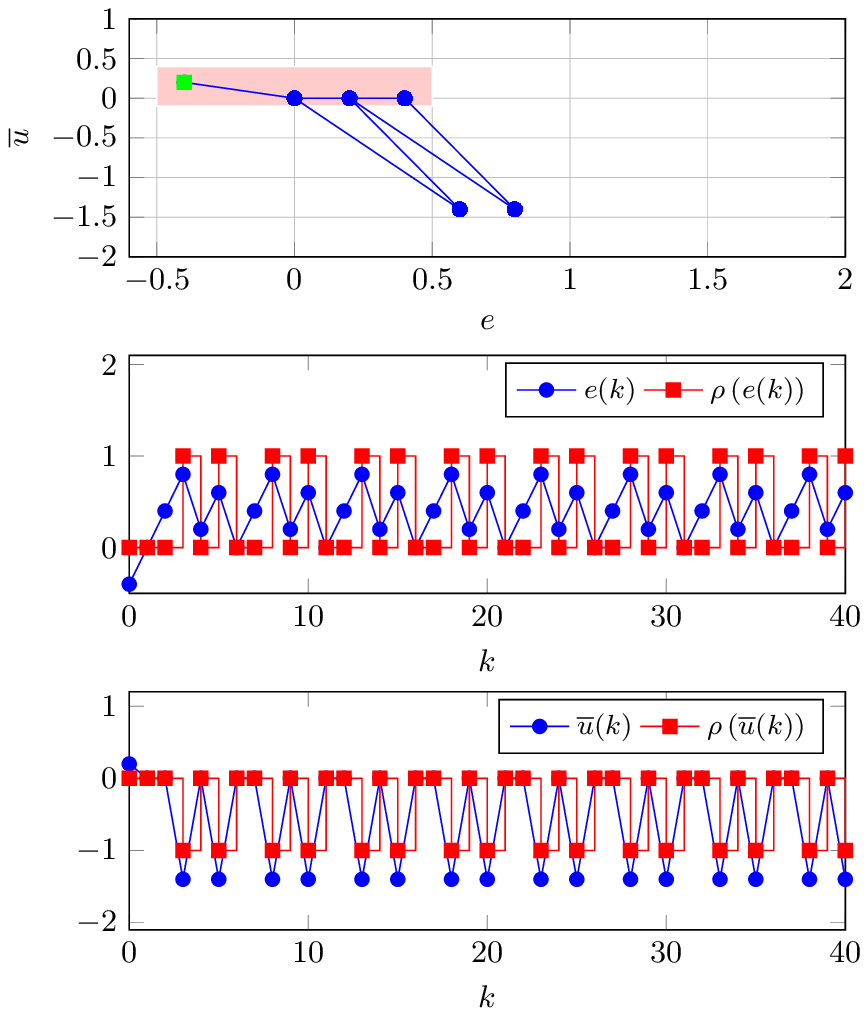}
\caption{Evolution of the switched control system when $\Delta_d = 2/5$.}
\label{fig:limitcycle-2}
\end{figure}

The following corollary directly follows from Theorem~\ref{th:qinv} and Theorem~\ref{th:periodicity}, and summarizes the results of the limit cycle analysis.
\begin{corollary}\label{th:corollary}
If $1 < \alpha < \frac{3}{2}$ and $|\Delta_d| = \frac{n}{m}$, where $n,m\in \mathbb{N}$, $1 \leq n < m$, and $|\Delta_d| < \frac{1}{2}$, then the switched control system~\eqref{eq:modeUguale0_delta}-\eqref{eq:modeDiverso0_delta} admits a limit cycle where the error $e$ is kept within $[-0.5+\Delta_d,\ 0.5 +\Delta_d)$ if $\Delta_d > 0$, and within $(-0.5+\Delta_d,\ 0.5 + \Delta_d]$ if $\Delta_d < 0$ with a corresponding quantized version excursion of $1$.
\end{corollary}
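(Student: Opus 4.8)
The plan is to treat this as a genuine corollary, assembling it from the two preceding theorems rather than proving anything new, and then reading off the precise bounds on $e$ from the interval estimates that already appear inside the proof of Theorem~\ref{th:periodicity}. The three hypotheses $1 < \alpha < \frac{3}{2}$, $|\Delta_d| < \frac{1}{2}$, and $|\Delta_d| = \frac{n}{m}$ with $1 \le n < m$ are exactly what is needed to invoke both results in turn, so the whole argument is a matter of stitching and careful bookkeeping on the endpoints.

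First I would invoke Theorem~\ref{th:qinv}: the conditions $1 < \alpha < \frac{3}{2}$ and $|\Delta_d| < \frac{1}{2}$ put us in its regime, so the quantized pair $(\round{e},\round{\ub})$ ends up in the invariant set $\{(0,0),(\sign{\Delta_d},-\sign{\Delta_d})\}$. This immediately yields the claimed quantized-version excursion of $1$, since $\round{e}$ ranges only over $0$ and $\sign{\Delta_d}$. Next, because $|\Delta_d| = \frac{n}{m}$ is rational with $1 \le n < m$, Theorem~\ref{th:periodicity} guarantees that, once the trajectory lies in that invariant set, it evolves according to an $n$-periodic limit cycle of period $m$; this supplies the existence of the limit cycle asserted in the statement. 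At this point only the amplitude bounds on the unquantized $e$ remain to be established.

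The one part requiring any computation is extracting those bounds, and I would do it directly from the intervals already derived in the proof of Theorem~\ref{th:periodicity}. Taking $\Delta_d>0$ (the case $\Delta_d<0$ being symmetric), within each accumulation phase the process integrates the constant input $\Delta_d$, so $e$ increases monotonically from its post-switch value up to the value reached just before a switch, which lies in $[0.5,\,0.5+\Delta_d)$; hence the maximum of $e$ over the cycle is strictly below $0.5+\Delta_d$, giving the open right endpoint. The post-switch values $x^{+(i)}$ were shown to lie in $[-0.5+\Delta_d,\,-0.5+2\Delta_d)$, so the minimum of $e$ equals $-0.5+\Delta_d$ and is attained, giving the closed left endpoint. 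Combining, $e\in[-0.5+\Delta_d,\,0.5+\Delta_d)$, and the mirror argument delivers $(-0.5+\Delta_d,\,0.5+\Delta_d]$ for $\Delta_d<0$. I expect no real obstacle here, since the statement is a consequence of two already-proven theorems; the only delicate point is the open/closed nature of the endpoints, which must be traced exactly through the half-open intervals $[0.5,\,0.5+\Delta_d)$ and $[-0.5+\Delta_d,\,-0.5+2\Delta_d)$ rather than stated loosely, and getting these boundary cases right is essentially the entire content of the final step.
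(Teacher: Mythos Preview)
Your proposal is correct and follows essentially the same route as the paper: invoke Theorem~\ref{th:qinv} for the invariant set and unit excursion, Theorem~\ref{th:periodicity} for periodicity, and then track the integration phase up to $[0.5,\,0.5+\Delta_d)$ and the post-switch reset into $[-0.5+\Delta_d,\,-0.5+2\Delta_d)$ to read off the bounds on $e$. The only cosmetic differences are that the paper re-derives those two intervals directly via Proposition~\ref{th:uAlgebraic} rather than citing them from the proof of Theorem~\ref{th:periodicity}, and that your remark that the left endpoint ``is attained'' is a slight overstatement (the interval is closed because $x^{+(i)}\ge -0.5+\Delta_d$, not because equality necessarily occurs), though this does not affect the corollary's containment claim.
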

\begin{proof}
We only need to show that $e$ is kept within $[-0.5+\Delta_d,\ 0.5 +\Delta_d)$ if $\Delta_d > 0$, and within $(-0.5+\Delta_d,\ 0.5 + \Delta_d]$ if $\Delta_d < 0$. Suppose that $\Delta_d > 0$. By Theorem~\ref{th:qinv} and Proposition~\ref{th:uAlgebraic}, we have that at some time $k>1$ after entering the invariant set $\round{e(k)} = \round{\ub(k)} = 0$, and $\ub(k) = 0$. Then, the error evolves starting from $|e(k)|<1/2$, according to~\eqref{eq:eProposition} which becomes:
\begin{align}
e(k+h) = e(k+h-1) + \Delta_d
\label{eq:integralDynamics}
\end{align}
(since $\round{e} = \round{\ub} = 0$), until $1/2 \leq e(k+h) < 1/2+\Delta_d$, when $\round{e(k+h)} = 1$ and hence $\ub(k+h) = -\alpha\round{e(k+h)} = -\alpha$. At time $k+h+1$, the error is reset to $e(k+h+1) = e(k+h) +\Delta_d -1$, so that $-1/2 +\Delta_d \leq e(k+h+1) < -1/2+2\Delta_d$, and we are back to the integral dynamics~\eqref{eq:integralDynamics} because $\round{e} = \round{\ub} = 0$. From this analysis it follows that $-1/2+\Delta_d \leq e < 1/2 + \Delta_d$. Analogous derivations can be carried out for the case $\Delta_d <0$.
\end{proof}

\begin{remark}
The reachability numerical analysis in Section~\ref{sec:Reachability} shows that the limit cycle in Corollary~\ref{th:corollary} is globally attractive if we restrict $\alpha$ to the range $\frac{5}{4}< \alpha < \frac{3}{2}$.
\end{remark}

 \section{Simulation results}
\label{sec:Experiments}

We first present some simulation results comparing the three cases when no quantization is present in the control scheme, and when quantization is present and either the PI or its switched extension is implemented. Notice that in the absence of quantization the PI controller and its switched extension coincide. Figure~\ref{fig:differentMethods} reports the simulation runs for the three cases for a finite horizon of $30$ time units. In all three plots the error is normalized, i.e., a unitary resolution is assumed. The value used for $\alpha$ is $11/8$, and $\Delta_d = \sqrt{2}-1$, while the system state is initialized at $e(0) = 0$, and $u(0)= 0$.

While in the absence of quantization the error converges to $0$ with the designed controller, when quantization is in place it is not possible anymore to guaranteeing convergence to zero. In the case of PI control, the error oscillates in the area $[-1,1]$, while in the case of its switched extension, it ends up oscillating in the region $[0,1]$ according to Statement~\ref{th:globalAttractiveness} and Theorem~\ref{th:qinv}. It is worth noticing that for the chosen value of $\Delta_d$ the evolution of the control system state cannot be periodic by Theorem~\ref{th:periodicity}. This is reflected in the evolution of $e$ that oscillates in the gray area, but always assumes different values in the set.

\begin{figure}[h]
\centering
\includegraphics[scale=1]{./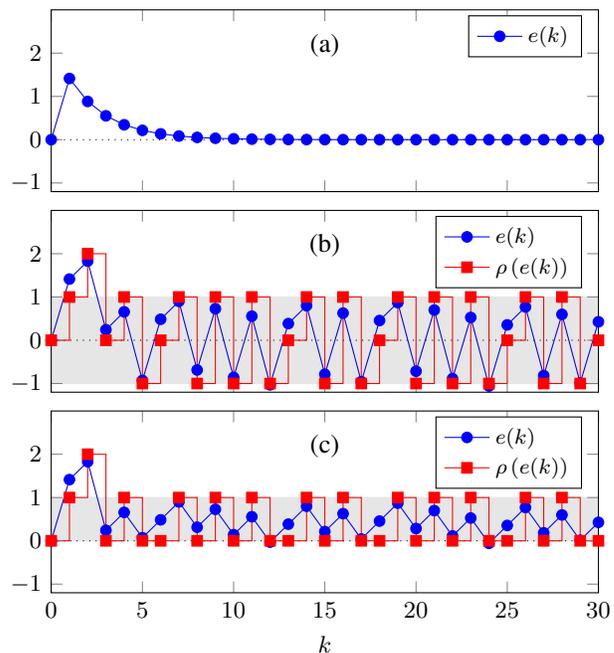}
\caption{Behavior of quantized (red line with squares) and non quantized (blue line with circles) error in a simulation run: (a) without quantizers, (b) with the standard PI with quantizers, and (c) with the switched PI with quantizers.}
\label{fig:differentMethods}
\end{figure}

\rev{We now consider a time-varying disturbance, which is initially constant and takes the value $d = \overline{d}_1 = 2.6$ ($\Delta_d = -0.4 <0$), then, starts decreasing linearly at time $k=20$ till it hits the value $d = \overline{d}_2 = 2.4$ at $k=40$ ($\Delta_d = 0.4>0$), and finally keeps constant.}

The results of the simulation with the switched controller are shown in Figure~\ref{fig:example_explained_switch}, with the error $e$, the control signal $u$, and the disturbance $d$ on the left column, and their quantized versions on the right column. The system is initialized with $e(0) = 0$, $u(0) = 0$, and we set $\alpha = 11/8$.

\begin{figure*}[t]
\centering
\includegraphics[scale=1]{./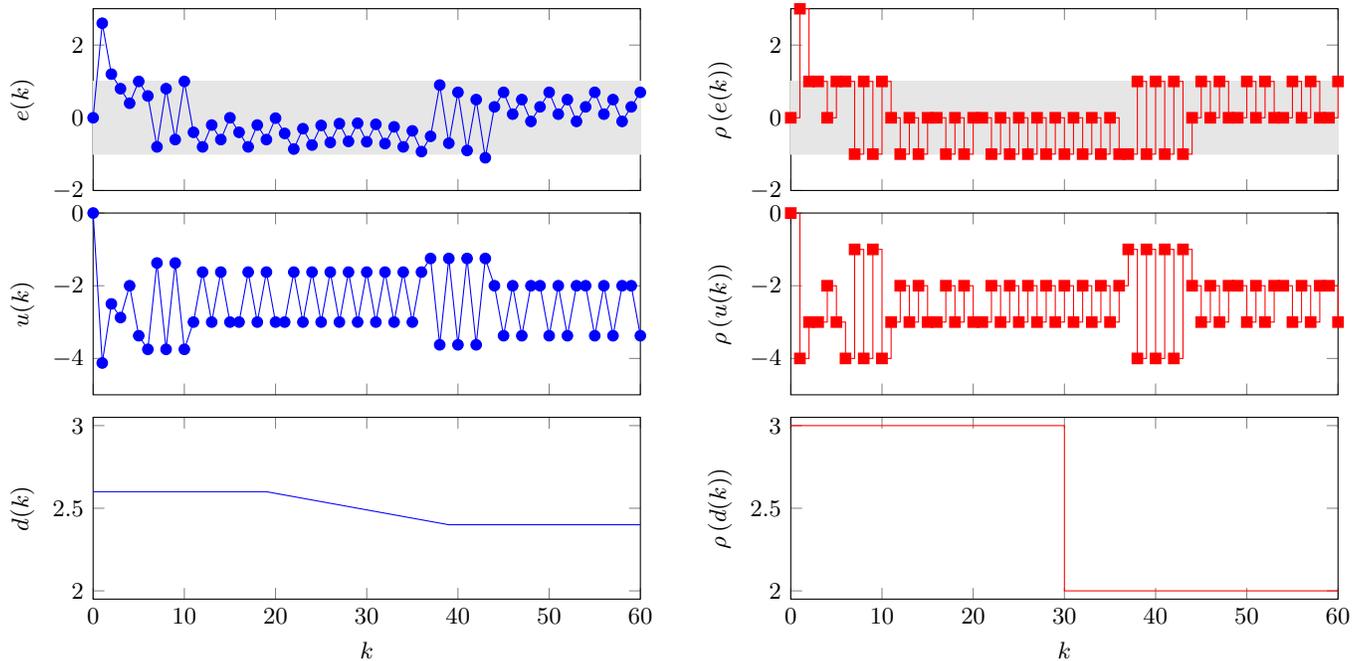}
\caption{Simulation with a time-varying disturbance and the switched PI controller.}
\label{fig:example_explained_switch}
\end{figure*}

\rev{Note that the abrupt change of sign of $\Delta_d$ when the disturbance crosses the threshold $2.5$ at time $k=30$ causes a transient which can be seen from the error behavior, and it is reflected in the quantized version only later, at time $k=37$, when the quantized error starts oscillating between $[-1,1]$ and correspondingly the quantized control input oscillates between $[-4,-1]$. Such oscillations stop when the (new) invariant set described in  Theorem~\ref{th:qinv} is reached according to Statement~\ref{th:globalAttractiveness}. The quantized error then exceeds the minimum resolution only temporarily during the (delayed) transient cause by the threshold crossing.}
In the case of the standard PI controller, the quantized error and the quantized control input keep oscillating between $[-1,1]$ and $[-4,-1]$, respectively, for the whole time horizon, irrespectively of the fact that the disturbance crosses the threshold (see Figure~\ref{fig:example_explained_noswitch}).

If we change $\overline{d}_2$ to $2.501$, the threshold $2.5$ is not crossed by the disturbance and the system keeps evolving in the same invariant set (see Figure~\ref{fig:no_commutation}).

\begin{figure*}[t]
\centering
\includegraphics[scale=1]{./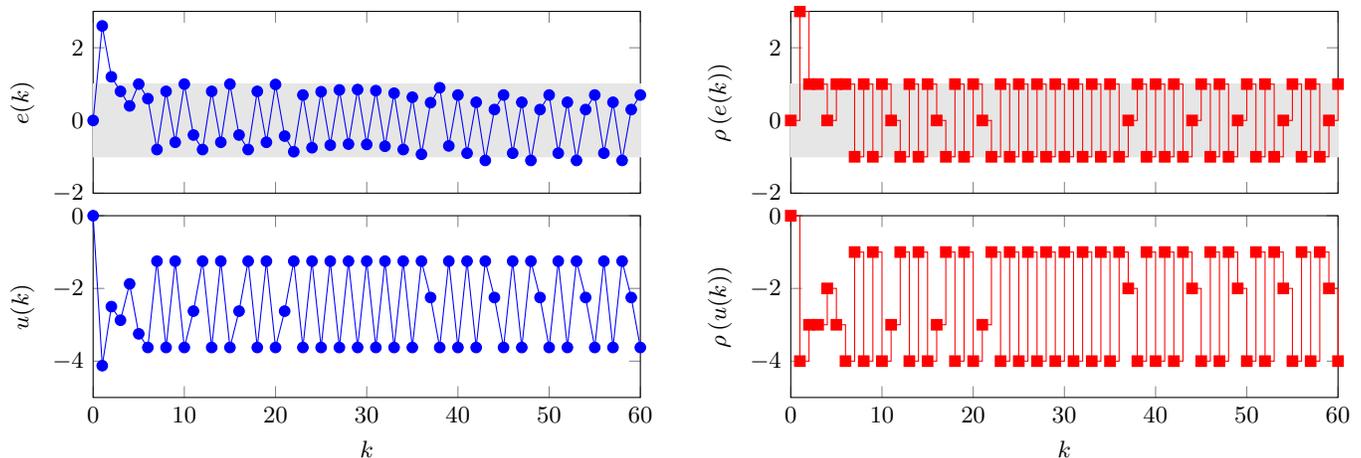}
\caption{Simulation with the time-varying disturbance in Figure~\ref{fig:example_explained_switch} and the standard PI controller.} \label{fig:example_explained_noswitch}
\end{figure*}

\begin{figure*}[t]
\centering
\includegraphics[scale=1]{./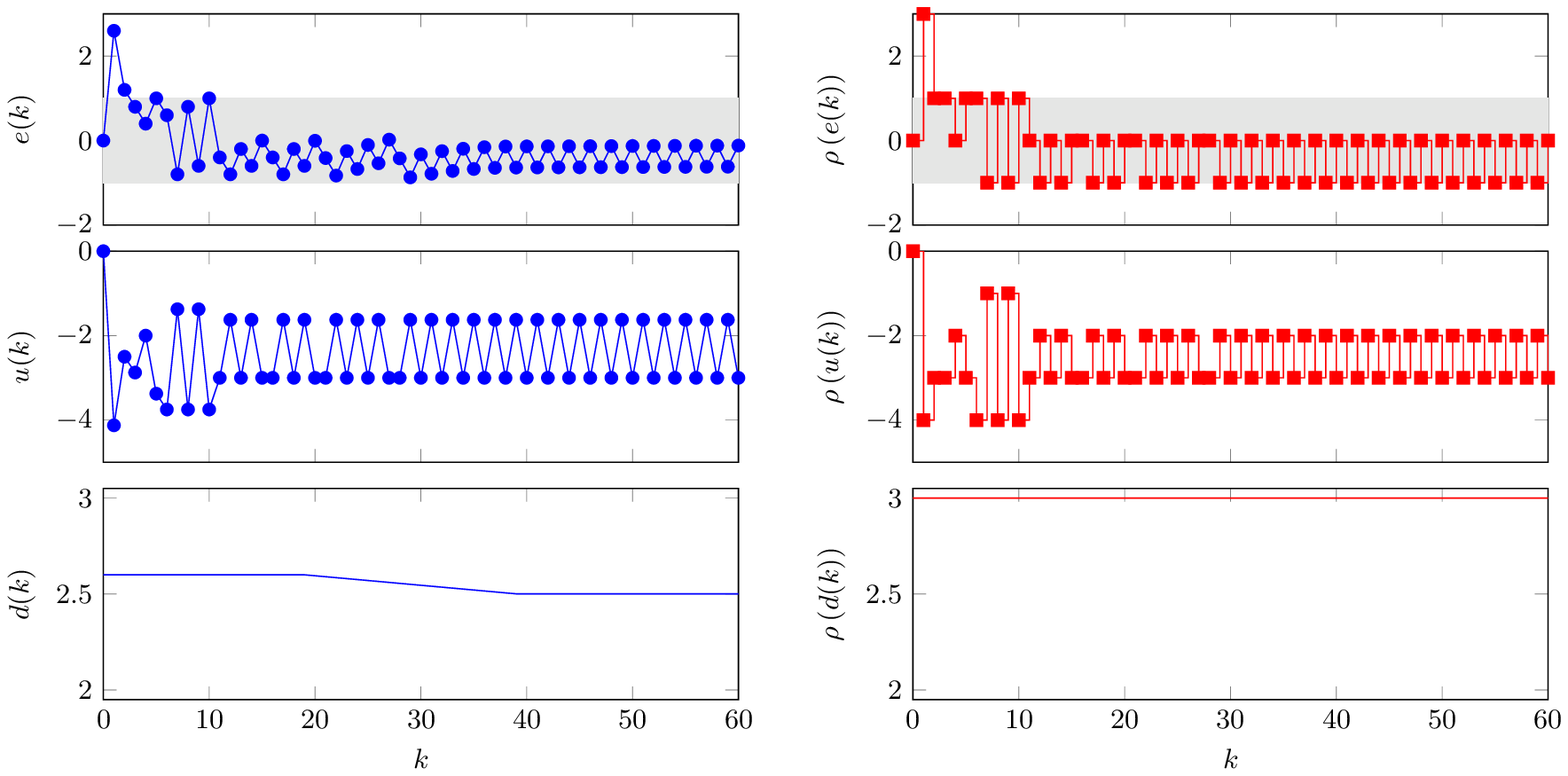}
\caption{Simulation with a time-varying disturbance that has constant quantized value, using the switched PI controller.} \label{fig:no_commutation}
\end{figure*}

The results presented next refer to a simulation campaign aimed at investigating the effect of the disturbance magnitude on the control performance, with and without the proposed switched extension.

The campaign was carried out by choosing the values of $\overline{d}$ reported in Table~\ref{tab:eRMS}. For each value of $\overline{d}$, two models -- one with bare PI control  and the other with switched PI -- were initialized to $e(0) = 0$ and $u(0) = 0$, and then subjected to a constant disturbance of the selected amplitude. Data were collected from the two simulated experiments just described over a finite horizon of $H = 1000$ time units.
We assess performance by computing the Root Mean Square (RMS) value of the quantized error, that is defined as:
\begin{align*}
RMS_{\round{e}} = \sqrt{\dfrac{1}{H}\sum_{i=0}^{H-1} \round{e(i)}^2}
\end{align*}
where $H$ is the length of the simulation.

Table~\ref{tab:eRMS} summarizes the results and shows that the proposed switched scheme decreases the $RMS_{\round{e}}$ by $30\%$.

\begin{table}[htb]
 \centering
 \begin{tabular}{ccc}
  \hline
  disturbance & \multicolumn{2}{c}{$RMS$ performance index}\\
  $\overline{d}$ &
  standard PI &
  switched PI  \\
  \hline\hline
  $\pm 0.01$       & $0.138$  & $0.100$ \\
  $\pm 0.02$       & $0.197$  & $0.141$ \\
  $\pm 0.04$       & $0.281$  & $0.200$ \\
  $\pm 0.05$       & $0.314$  & $0.223$ \\
  $\pm 0.1$        & $0.446$  & $0.316$ \\
  $\pm 0.2$        & $0.631$  & $0.447$ \\
  $\pm 0.4$        & $0.893$  & $0.632$ \\
  $\pm (\sqrt{2}-1)$ & $0.909$  & $0.643$ \\
  \hline
 \end{tabular}
 \caption{$RMS$ performance index of the simulation campaign.}
 \label{tab:eRMS}
\end{table}
 \section{Conclusions and future work}
\label{sec:Conclusions}

A switched control scheme was proposed for reducing the degradation effect due to the quantization of both control and controlled variables in a system described as an integrator with unit delay. Set invariance and limit cycle analysis were performed, jointly with a numerical reachability study, to assess the switched control scheme performance and provide guidelines for control tuning. In particular, necessary and sufficient conditions for the presence of $n$-periodic limit cycles of period $p$ were discussed. Finally, simulation results confirm the effectiveness of the proposed solution.

Future work will concern the evaluation of the proposed approach in specific types of applications, where the quantization effect is relevant.
Results are confined to a specific class of systems. Further investigations are needed also to extend the proposed approach to a larger class of problems.

\bibliographystyle{IEEEtran}
\bibliography{2017-TAC-quantized}

\end{document}